\newtheorem{theorem}{Theorem}
\newtheorem{definition}{Definition}
\newtheorem{lemma}[theorem]{Lemma}
\newtheorem{corollary}[theorem]{Corollary}
\DeclareMathOperator{\cf}{cf}
\def\inst#1{$^{#1}$}
\date{}
\title{On Crossing-Families in Planar Point Sets\footnote{
		Research for this article was initiated in the course of the bilateral research project ``Erd\H{o}s--Szekeres type questions for point sets'' 
		between Graz and Prague, supported by the OEAD project CZ~18/2015 and project no.\ 7AMB15A~T023 of the Ministry of Education of the Czech Republic.
		J.K. was also supported by the grant no. 21-32817S of the Czech Science Foundation (GA\v{C}R) and by Charles University project UNCE/SCI/004. 
		M.S.\ was supported by the DFG Grant SCHE~2214/1-1.
		P.V.~was supported by the grant no. 21-32817S of the Czech Science Foundation (GA\v{C}R).
		B.V.\, partially supported by Austrian Science Fund within the collaborative DACH project \emph{Arrangements and Drawings} as FWF project \mbox{I 3340-N35}. 
}}
\begin{document}
	
	\author{
		Oswin Aichholzer\inst{1}
		\and
		Jan Kyn\v{c}l\inst{2}
		\and
		Manfred Scheucher\inst{3}
		\and
		Birgit Vogtenhuber\inst{1}
		\and 
		Pavel Valtr\inst{2}
	}

	\maketitle

	\begin{center}
		{\footnotesize
			\inst{1} 
			Institute of Software Technology, Graz University of Technology, Austria \\
			\texttt{oaich@ist.tugraz.at}, \texttt{bvogt@ist.tugraz.at}
			\\\ \\
			\inst{2} 
			Department of Applied Mathematics, \\
			Faculty of Mathematics and Physics, Charles University, Czech Republic \\
			\texttt{kyncl@kam.mff.cuni.cz}
			\\\ \\
			\inst{3} 
			Institut f\"ur Mathematik, Technische Universit\"at Berlin, Germany\\
			\texttt{scheucher@math.tu-berlin.de}
			\\\ \\
		}
	\end{center}

\begin{abstract}
	A $k$-crossing family in a point set $S$ in general position is a set of $k$ segments spanned by points of $S$ such that all $k$ segments mutually cross.
	In this short note we present two statements on crossing families which are based on sets of small cardinality:
	(1)~Any set of at least 15 points contains a crossing family of size~4.
	(2)~There are sets of $n$ points which do not contain a crossing family of size larger than~$8\lceil \frac{n}{41} \rceil$.
	Both results improve the previously best known bounds.
\end{abstract}


\section{Introduction}

Let $S$ be a set of $n$ points in the Euclidean plane  \emph{in general position}, 
that is, no three points in $S$ are collinear. 
A \emph{segment} of $S$ is a line segment with its two endpoints (which we will also call vertices) being points of~$S$. 
\begin{definition} 
	A \emph{$k$-crossing family} in a point set $S$ is a set of $k$ segments spanned by points of $S$ such that all $k$ segments mutually cross in their interior.
\end{definition} 

For a point set $S$, let $\cf(S)$ be the maximum size of a crossing family in~$S$, 
and let $\cf(n)$ be the minimum of $\cf(S)$ over all point sets $S$ of cardinality~$n$ in general position. 

\begin{figure}
	\centering
	\includegraphics[page=2,scale=0.4]{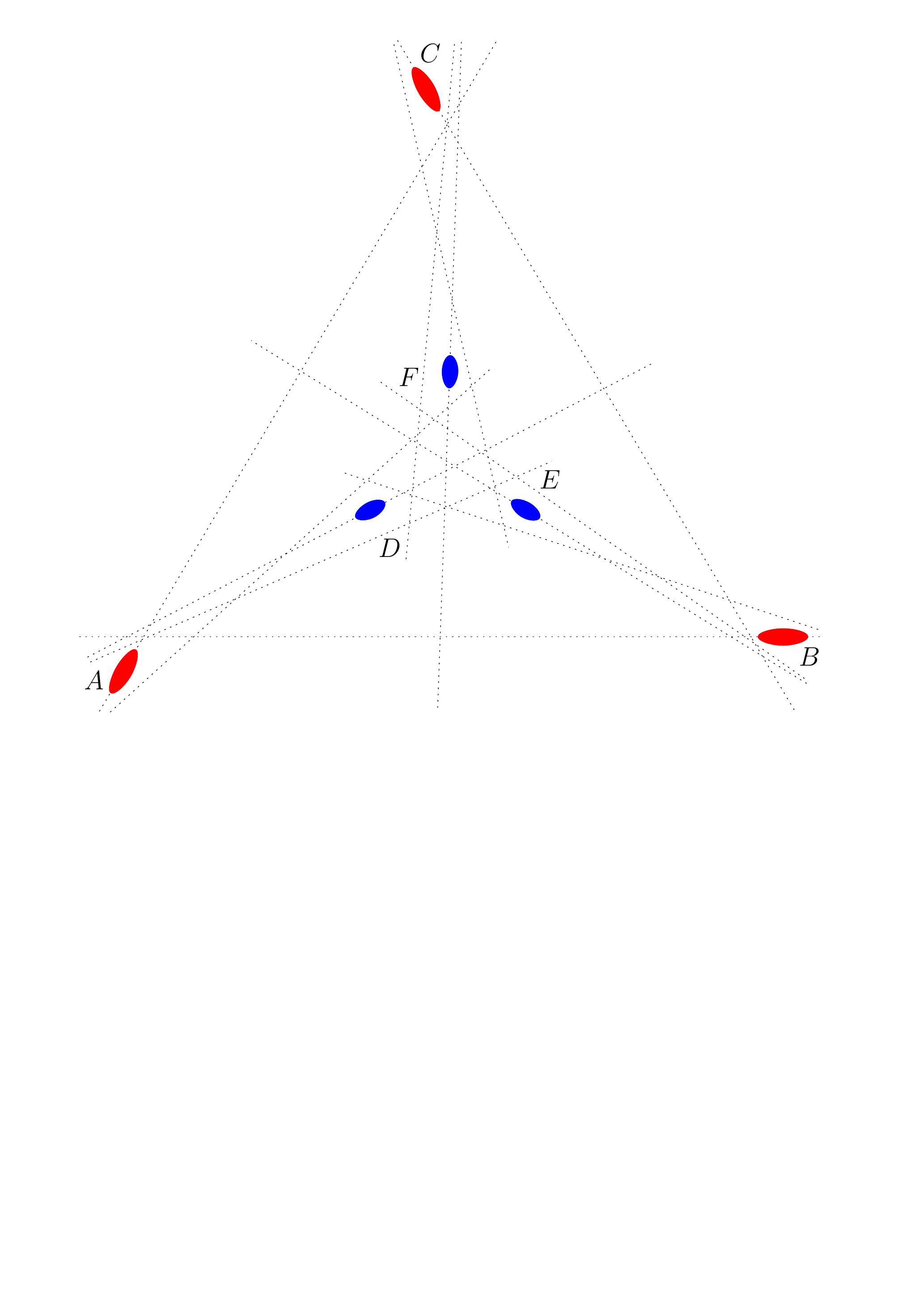} 
\caption{\label{fig:n9no3fam}
	A set $S$ of 9 points which does not contain a 3-crossing family, that is, $\cf(S)=2$.}
\end{figure}

  It is easy to see that $\cf(n)$ is a monotone function.
  From the fact that the complete graph with $5$ vertices is not
  planar it follows that any set of $n\ge 5$ points has a crossing
  family of size at least $2$. In~\cite{ak-psotd-01} it is shown that every set
  with $n \ge 10$ points admits a crossing family of size $3$. The
  result is based on analyzing the set of all order types of size~10. The bound on $n$ is tight, that is, there exist $12$ order
  types of $9$ points which do have a maximal crossing family of size
  $2$. One such set is shown in   Figure~\ref{fig:n9no3fam}.

 Aronov et al.~\cite{aegkkps-94} proved in 1994 the existence of crossing families of size
$\sqrt{n/12}$ for every set of $n$ points, which until recently was the best general lower bound. Their proof relies on the existence of a pair of mutually avoiding sets of size $\sqrt{n/12}$; two point sets are called \emph{mutually avoiding} if every line determined by a pair of points from one of the sets is disjoint from the convex hull of the other set. The lower bound on the size of mutually avoiding sets is asymptotically tight. 

Only in 2019 
Pach, Rubin, and Tardos~\cite{PRT2019} showed  in a breakthrough result that any set of $n$ points in general position in the plane contains a crossing family of size $n^{1-o(1)}$ (the full version of this paper appeared as~\cite{PACH2021107779}).
This almost shows the generally accepted conjecture that $\cf(n)$ should be in~$\Theta(n)$.
Further evidence for this conjecture comes from a result by Valtr~\cite[Theorem 14]{Valtr1996}, that a
set of points chosen independently at random from a convex shape contains with high probability a linearly sized crossing family.
Also the currently best upper bounds supported this conjecture. Recently, Evans and
Saeedi~\cite{evans2019problems} showed that $\cf(n) \le 5\lceil \frac{n}{24} \rceil$ on which we will improve.

\subsection{Results and motivation}

In Section~\ref{sec:15always4},
we use exhaustive abstract extension of order types and SAT solvers to investigate crossing families in small point configurations.
We verify some previous results and determine the value $\cf(15)=4$.
Based on our computational results, 
we conjecture that every set of 21 or more points contains a crossing family of size~5.

In Section~\ref{sec:upperbound},
we present a set of 41 points without 9-crossing families.
By utilizing this set, we improve the upper bound to $\cf(n) \le 8\lceil \frac{n}{41} \rceil$.

Our motivation to obtain these results comes, besides that we think that they are of their own interest, from a possible application of crossing families to other combinatorial enumeration problems.
For example, the existence of crossing families of a fixed size can be used to obtain results about the (asymptotic) number of geometric graphs of a certain class.
To obtain the first non-trivial lower bound for the number of triangulations of a set of $n$ points in the plane~\cite{ahn-lbntp-04} a recursive divide $\&$ conquer approach was used. To avoid over-counting, it was essential that the edges used to subdivide the point set are pairwise crossing, as no triangulation can contain two such edges at the same time. The induction base (how many mutually crossing edges on how few points can we guarantee) then determines the best possible base in the exponential bound. In~\cite{ahn-lbntp-04} this was optimized over crossing families of constant size up to $k=4$, resulting in a lower bound for the number of triangulations of $\Omega(2.33^n)$. Another example is related to the topic of the Geometric Optimization Challenge 2022~\cite{website_SoCG22}, which is part of the CG week, namely the problem of partitioning a given geometric graph into the minimum possible number of plane subgraphs.
One of the reference papers there is~\cite{ahkklpsw-ppstp-17}. In that paper, a lower bound on the number of edge-disjoint plane spanning trees was shown by using asymptotic results on the size of crossing-families.

Both mentioned results have meanwhile been improved by using other techniques, see for example~\cite{aahpsv-ilbnt-16} and~\cite{BiniazGarcia2020}, respectively. But they testify that both, improved asymptotic bounds on the size of crossing families as well as improvements for the size of crossing families in small point sets, can be used to strengthen general (asymptotic) bounds. We expect that our results might be useful in a similar manner and may stimulate further results in that direction.


\section{Sets of 15 points always contain a 4-crossing family}
\label{sec:15always4}

From the mentioned result $\cf(10)=3$~\cite{ak-psotd-01} we also know that any set of 11 points contains a 3-crossing family, and no such set can contain a crossing family of size more than 5 (as at least $2k$ points are needed for a $k$-crossing family).
The following table shows how the maximal size of a crossing family is distributed among all combinatorially different point sets of size 11. It was computed with the help of the database of all order types of size 11 obtained in~\cite{ak-aoten-06}.

\begin{table}[h]
\begin{center}
\begin{tabular}{r|r|r}
\hline
       $k$    & number of order types & percentage \\  
\hline
\hline
3 &  63 978 178 & 2.7~\%\\
4 & 1 783 117 647 & 76.4~\%\\
5 &  487 417 082 & 20.9~\%\\
\hline
total & 2 334 512 907& 100.0~\%\\
\hline
\end{tabular}
\end{center}
\caption{The number of realizable order types of size 11 with a maximum crossing family of size~$k$.
}
\label{tab:crossfam11}
\end{table}

To obtain the largest point set containing no crossing family of size 4, we made a complete abstract order type extension from $n=11$ to $n=15$.
The database of all realizable order types of cardinality $11$ contains 2 334 512 907 sets~\cite{ak-aoten-06}, of which 63~978~178 (about 2.7~\%) contain no 4-crossing family; see Table~\ref{tab:crossfam11}. 
Since adding points to an existing set can never decrease the size of the maximal crossing family, we need to consider only those sets in order to iteratively find the largest set that contains no crossing family of size 4.

The approach of extending order types in an abstract way as described in~\cite{ak-aoten-06} has the advantage that there is no need to realize the obtained sets, as we actually are interested in the smallest cardinality where no such sets exist. This avoids dealing with the notoriously hard problem of realizing abstract order types, which is known to be $\exists \mathbb{R}$-hard~\cite{mnev1985manifolds}. The extension is done iteratively by adding one more (abstract) point in each step. Afterwards each obtained abstract order type is checked for the maximum size of a crossing family, and if this is at least $4$ the abstract order type is discarded.

After the first three rounds we obtained 2 727 858 abstract order types of cardinality 14 which do not contain a 4-crossing family. All these sets were extended by one further point (in the abstract setting), but all resulting sets contained a 4-crossing family.
Thus, the largest possible set with no 4-crossing family has size at most 14. 
The whole process of abstract extension took about 100 hours, computed in parallel on 40 standard CPUs.

To show that the obtained bound is best possible it is sufficient to realize at least one generated abstract order type of size 14, and such an example is given in Figure~\ref{fig:n14no4fam}. 
Together with the set of 20 points with no 5-crossing family depicted in Figure~\ref{fig:n20no5fam}, we conclude the following.

\begin{theorem}\label{thm:4crossingfamily}
Every set of at least $15$ points in the plane in general position contains a $4$-crossing family. 
Moreover, we have 
\[
\cf(n)=3 \text{ for } 10 \le n \le 14 
\quad
\text{ and }
\quad 
\cf(n)=4 \text{ for } 15 \le n \le 20.
\]
\end{theorem}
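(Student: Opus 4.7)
The plan is to combine the explicit lower bounds obtained via exhaustive computer enumeration (which gives the $\cf(n)\ge 4$ part for $n\ge 15$ and the $\cf(n)\ge 3$ part for $n\ge 10$) with the concrete constructions from Figures~\ref{fig:n14no4fam} and~\ref{fig:n20no5fam}, then invoke monotonicity of $\cf$ to fill in the ranges.

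First, I would establish the monotonicity of $\cf(n)$: if $S\subseteq S'$, then any crossing family in $S$ is a crossing family in $S'$, so $\cf(|S'|)\ge \cf(|S|)$ via choosing worst-case sets. This means it suffices to prove the boundary cases. For the lower bound $\cf(n)\ge 4$ on $n\ge 15$, I would appeal to the exhaustive abstract order type extension outlined in the text: start from the $63\,978\,178$ order types of size $11$ that contain no $4$-crossing family (as certified using the database of \cite{ak-aoten-06} and Table~\ref{tab:crossfam11}), and iteratively extend each by one abstract point, discarding any extension that creates a $4$-crossing family. After three rounds one is left with $2\,727\,858$ abstract order types of size $14$ without a $4$-crossing family; the fourth round produces no survivors. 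Since abstract extensions cover every possible combinatorial position of an added point, every abstract order type of size $15$ contains a $4$-crossing family, and hence so does every realizable one. By monotonicity this propagates to all $n\ge 15$.

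Next, for the matching upper bounds I would exhibit the two realizations. The point set of $14$ points in Figure~\ref{fig:n14no4fam} is a realization of one of the $2\,727\,858$ abstract order types found above; one verifies directly (by inspecting all $\binom{14}{2}$ segments and checking that no four of them pairwise cross) that it has no $4$-crossing family, so $\cf(14)\le 3$. Combined with $\cf(10)\ge 3$ from \cite{ak-psotd-01} and monotonicity this yields $\cf(n)=3$ for $10\le n\le 14$. Similarly, for the set of $20$ points in Figure~\ref{fig:n20no5fam} one verifies the absence of a $5$-crossing family, giving $\cf(20)\le 4$; combined with the newly proved $\cf(15)\ge 4$ and monotonicity this yields $\cf(n)=4$ for $15\le n\le 20$. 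The first assertion of the theorem, that every set of at least $15$ points contains a $4$-crossing family, is then just the statement $\cf(n)\ge 4$ for $n\ge 15$.

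The main obstacle is of course the computational step: the correctness of the claim $\cf(n)\ge 4$ for $n\ge 15$ hinges entirely on trusting the exhaustive abstract extension, and in particular on the fact that abstract (and not only realizable) order types are enumerated, so that nothing is missed via non-realizability. Here one must be careful that the pruning predicate (``contains no $4$-crossing family'') is invariant under extension in the right direction, i.e., that it can only become false, never true, when adding a point, so that discarding offending extensions is safe. This is immediate because crossings of existing segments are preserved under adding points. The remaining verification that the two figures are indeed realizations with the claimed properties is routine and can be done by a brute-force check of all four-tuples (respectively five-tuples) of segments.
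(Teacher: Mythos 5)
Your proposal matches the paper's own argument: monotonicity of $\cf$, the exhaustive abstract order type extension from the $63\,978\,178$ size-$11$ sets without a $4$-crossing family up to size $15$ (with $2\,727\,858$ survivors at size $14$ and none at size $15$), and the realized $14$- and $20$-point witnesses from Figures~\ref{fig:n14no4fam} and~\ref{fig:n20no5fam} for the matching upper bounds. Your additional remarks on the safety of the pruning predicate and on abstract versus realizable order types are correct and consistent with the paper's reasoning.
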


\begin{figure}
	\centering
	\includegraphics{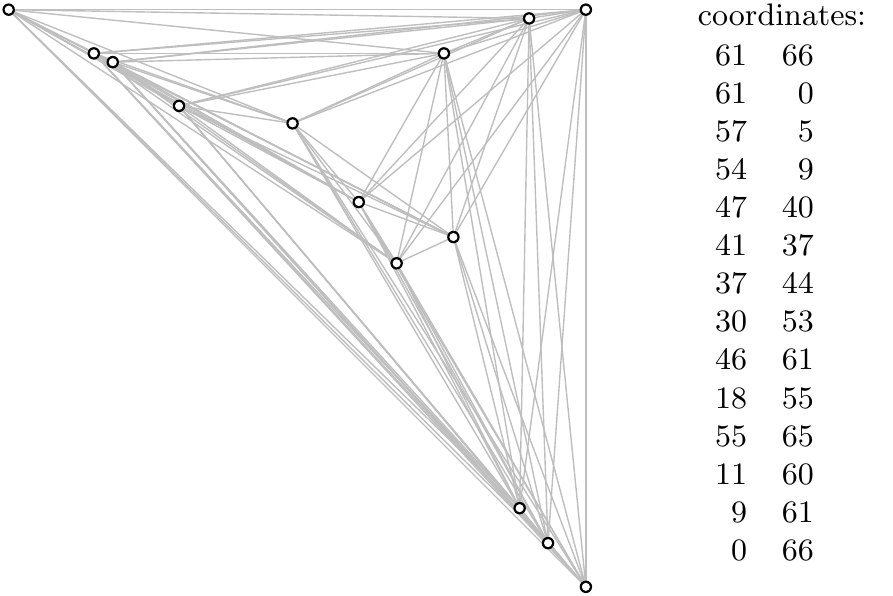}
	\caption{\label{fig:n14no4fam}
		A set $S$ of 14 points containing a 3-crossing family but no 4-crossing family, that is, $\cf(S)=3$.
	}
\end{figure}

Besides the above described computer proof,
we have also developed a SAT framework which allowed us to verify $\cf(15)>3$ within less than 2 CPU days using the SAT solver CaDiCaL~\cite{Biere2019}. 
While the instance is about 33~MB, 
the proof generated by CaDiCaL is 21~GB. 
The correctness of the proof can be verified 
using DRAT-trim \cite{WetzlerHeuleHunt2014} within 4 additional CPU days.

We have also used this framework to verify $\cf(10)>2$ \cite{ak-psotd-01}.
The python program creating the instance is available on our supplemental website \cite{website_crossing_families}.
The instance to decide whether or not $\cf(21)= 5$ is about 26~GB, but the solver did not terminate so far.

The idea behind the SAT model is very similar as in \cite{Scheucher2020}:
We assume towards a contradiction that $\cf(15) \le 3$, that is, there is a set of 15 points with no 4-crossing family.
We have Boolean variables $X_{abc}$ to indicate whether three points $a,b,c$ are positively or negatively oriented.
As outlined in \cite{Scheucher2020}, these variables have to fulfill the signotope axioms \cite{FelsnerWeil2001,BalkoFulekKyncl2015}.
Based on the variables for triple orientations, we then assign auxiliary variables $Y_{ab,cd}$ to indicate whether the two segments $ab,cd$ cross.
Finally we assert that for any set of four segments with pairwise 
distinct endpoints, at least one pair of them does not cross. 
As the SAT solver CaDiCaL terminates with ``unsatisfiable'',
no such point set exists, and hence $\cf(15) > 3$.

\section{Small sets and an upper bound}
\label {sec:upperbound}

The following theorem was already implicitly used by Aronov et al.~\cite[Section~6]{aegkkps-94} in their discussion, 
where they stated the upper bound $\cf(n)\le\frac{n}{4}$. 
It can also be found as Lemma~3 in~\cite{evans2019problems}. 
Since in \cite{aegkkps-94} no proof is given and the proof from~\cite{evans2019problems} appears to be incomplete\footnote{In the proof in~\cite{evans2019problems}, each point is repalced by ``imperceptibly perturbed copies'' with no prescribed structure. However, this does not guarantee that odd cycles as illustrated in Figure~\ref{fig:wrongdoubling} are avoided.}, 
we include a full proof here. 

\begin{figure}[htb]
	\centering
	\includegraphics[page=2,scale=0.8]{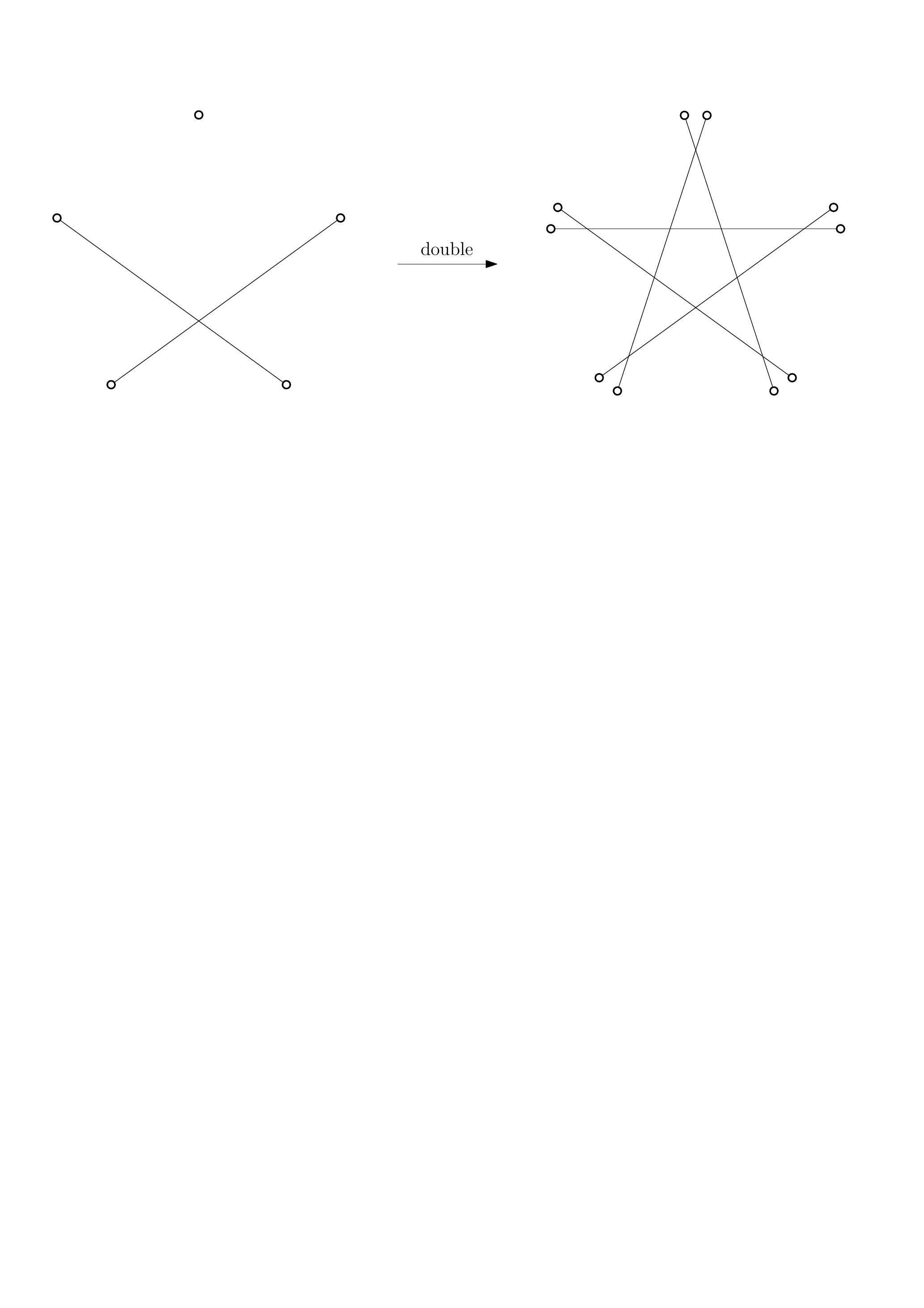} 
\caption{\label{fig:wrongdoubling}
	Five points in convex position have crossing family of size at most $2$ (left). Replacing each point by two imperceptibly perturbed copies with no prescribed structure as in~\cite{evans2019problems} may result in a set of 10 points with a crossing family of size larger than 4 (right). The process needs to avoid generating odd cycles of pairwise crossing edges; see the proof of Theorem~\ref{thm:factor} for details.}
\end{figure}

\begin{theorem}\label{thm:factor}
Let $S\subset \mathbb{R}^2$ be a set of $n$ points in general position with $\cf(S)=k$.
Then for any $N \ge n$ there exists a set $S'\subset \mathbb{R}^2$ of $N$ points in general position with $\cf(S') \le k\lceil \frac{N}{n} \rceil$.
\end{theorem}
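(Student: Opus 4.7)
The plan is a controlled \emph{blow-up} of $S$. Let $m := \lceil N/n\rceil$. I would construct a set $S_0'$ of size $nm \ge N$ in general position by replacing each $p_i \in S$ by a tiny cluster $C_i$ of $m$ points placed inside a small $\varepsilon$-neighborhood of $p_i$, and then take $S'$ to be any $N$-element subset of $S_0'$. Since $\cf$ does not increase upon removing points, it suffices to establish $\cf(S_0') \le km$.

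For the construction, I would fix a generic unit direction $\vec u$ and a sufficiently small $\varepsilon > 0$, and place each $C_i$ as $m$ points on a tiny segment of length $\varepsilon$ in direction $\vec u$ centered at $p_i$, with a small generic perturbation so that $S_0'$ is in general position. With $\varepsilon$ small enough, the crossing structure of $S_0'$ closely reflects that of $S$: intra-cluster segments belonging to distinct clusters are disjoint; and two inter-cluster segments whose four endpoints lie in four distinct clusters $C_i, C_j, C_k, C_l$ cross in $S_0'$ if and only if $p_ip_j$ and $p_kp_l$ cross in $S$. The one non-trivial case is that of two inter-cluster segments sharing a cluster but going to different partner clusters, where---as Figure~\ref{fig:wrongdoubling} illustrates---a careless perturbation (as in~\cite{evans2019problems}) can yield odd cycles of pairwise crossings.

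Given a crossing family $F$ in $S_0'$, I would argue as follows. If $F$ contains an intra-cluster edge in some $C_i$, then a short argument using the smallness of $\varepsilon$ shows every other edge of $F$ must have an endpoint in $C_i$, forcing $|F| \le m \le km$. Otherwise, all edges of $F$ are inter-cluster; project each $e \in F$ to its cluster pair $\pi(e) \in \binom{[n]}{2}$, and let $F_{ij} = \{e \in F : \pi(e) = \{i,j\}\}$ and $\tilde F = \{\pi(e) : e \in F\}$. Distinct edges in $F_{ij}$ must use distinct endpoints in both $C_i$ and $C_j$, so $|F_{ij}| \le m$. The crux is to show that $\tilde F$ itself is a crossing family in $S$, which yields $|\tilde F|\le k$ and hence $|F| \le m\cdot |\tilde F| \le km$.

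The main obstacle, and precisely the step elided in~\cite{evans2019problems}, is proving that $\tilde F$ is a crossing family in $S$---equivalently, that no two pairs $\{i,j\},\{i,k\}\in\tilde F$ share an index. This requires showing that two inter-cluster edges of $F$ sharing cluster $C_i$ but going to different clusters $C_j, C_k$ cannot coexist in any crossing family. A local geometric analysis near $p_i$---decomposing $\vec u$ in the basis $\{p_j - p_i,\ p_k - p_i\}$---characterizes when such a pair crosses in $S_0'$ in terms of explicit cones determined by $\vec u$ and the triple $p_i, p_j, p_k$. The proof must then choose the perturbation carefully (for instance, a per-cluster direction $\vec u_i$, or a combinatorial selection exploiting the crossing structure of $S$ itself) so that the parity-breaking configurations of Figure~\ref{fig:wrongdoubling} are excluded. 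Carrying out this geometric verification rigorously, while keeping the book-keeping clean enough to preserve the bound $km$, is where the technical work concentrates.
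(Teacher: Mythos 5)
Your blow-up strategy is the same as the paper's, and you correctly flag the odd-cycle pitfall, but the step you reduce everything to is not provable: for $m\ge 2$ there is \emph{no} placement of the clusters for which every crossing family $F$ in $S_0'$ projects to a matching $\tilde F$. Two segments leaving two \emph{different} copies of $p_i$ towards two different clusters $C_j$ and $C_k$ can always be made to cross near $C_i$ (for one of the two orderings of the copies along the cluster they cross, for the other they do not, and a crossing family is free to pick the crossing order), so pairs $\{i,j\},\{i,k\}$ with a common index do occur in $\tilde F$. Hence ``$\tilde F$ is a crossing family in $S$, so $|\tilde F|\le k$'' fails, and the final inequality $|F|\le m|\tilde F|\le km$ does not follow. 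What \emph{is} achievable is weaker: with the paper's explicit placement $p_i=(x+i\varepsilon,\,y+(i\varepsilon)^2)$, all projected neighbours of a cluster lie strictly above it or strictly below it, which makes the projected geometric graph $G_F$ bipartite (killing odd cycles, the Figure~\ref{fig:wrongdoubling} obstruction); since $G_F$ is also a geometric thrackle it has no even cycles either (Lemma~\ref{lem:even_cycles_thrackle}), so $G_F$ is a forest --- but in general not a matching. Your proposal leaves the placement unspecified (``choose the perturbation carefully, for instance a per-cluster direction'') and does not verify that any choice yields even this forest property.

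The second missing piece is how to get from ``forest'' to the bound $km$. The paper does this by re-routing: if $G_F$ has a vertex $q$ of degree $>1$ with a leaf neighbour $p$, replace all segments of $F'$ incident to copies of $q$ by $m$ mutually crossing segments between the copies of $q$ and the copies of $p$; this does not decrease $|F'|$ and strictly reduces the number of high-degree vertices, so one may assume $G_F$ is a matching, after which your intended count applies. (An alternative that avoids re-routing: the edges of $F'$ form a matching on the copies, so their number is at most $m$ times a vertex cover of $G_F$; by K\H{o}nig's theorem on the bipartite forest $G_F$ this equals $m$ times a matching of $G_F$, and a matching in the thrackle $G_F$ is a crossing family in $S$, hence has size at most $k$.) As written, your argument is missing both the concrete construction that rules out odd cycles and the mechanism for handling trees that are not matchings, and the statement you propose to prove instead is false.
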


For our proof of Theorem~\ref{thm:factor}, we will use a simple property of geometric thrackles.
A~\emph{geometric thrackle} is a geometric graph such that each pair of edges (drawn as line segments) either meets at a common vertex or crosses properly.
Woodall proved that a graph can be drawn as a geometric thrackle if and only if it is a a subgraph of a graph obtained by attaching leaves (vertices of degree $1$) to the vertices of an odd cycle~\cite[Theorem 2]{Woo71_deadlock}. We will need only the following weaker characterization.

\begin{lemma}
	\label{lem:even_cycles_thrackle}
	A geometric thrackle $T$ contains no even cycles.
\end{lemma}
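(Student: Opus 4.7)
The plan is to reduce to showing that the cycle $C_{2k}$ (for $k\ge 2$) admits no straight-line thrackle drawing, which is immediate since any edge-subset of a geometric thrackle is itself a geometric thrackle. Suppose for contradiction such a drawing exists, with vertices $v_1,\dots,v_{2k}$ in cyclic order and edges $e_i=v_iv_{i+1}$ (indices mod~$2k$). I would rotate the plane so that $v_1$ is the unique leftmost vertex; then the two edges incident to $v_1$, namely $e_1$ and $e_{2k}$, both leave $v_1$ strictly to the right, and after possibly relabelling so that $v_2$ is the upper neighbour, they split the open right half-plane at $v_1$ into three angular regions $R_1$ (above both rays), $R_2$ (between them) and $R_3$ (below both rays), with $v_2$ on the ray bordering $R_1$ and $R_2$ and $v_{2k}$ on the ray bordering $R_2$ and $R_3$.

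Next I would pin down the positions of $v_3$ and $v_{2k-1}$. The edge $e_2=v_2v_3$ shares no vertex with $e_{2k}$, so the thrackle hypothesis forces them to cross; hence $v_3$ lies on the opposite side of the supporting line of $e_{2k}$ from $v_2$, and since $v_2$ is above that line, $v_3$ must be below it, which places $v_3\in R_3$. Symmetrically, applying the same argument to $e_{2k-1}$ and $e_1$ yields $v_{2k-1}\in R_1$. In the base case $k=2$ one has $v_{2k-1}=v_3$, and the two conclusions $v_3\in R_3$ and $v_3\in R_1$ already contradict $R_1\cap R_3=\emptyset$.

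For $k\ge 3$ I would in addition locate the intermediate vertices $v_4,\dots,v_{2k-2}$: for every $i\in\{3,4,\dots,2k-2\}$ the edge $e_i$ is non-adjacent to both $e_1$ and $e_{2k}$ and must cross both, so $v_i$ and $v_{i+1}$ lie on opposite sides of both supporting lines. The main point requiring care is the elementary observation that this forces one of $v_i,v_{i+1}$ into $R_1$ and the other into $R_3$: a point in $R_2$ is on the same side of the line through $e_1$ as any point of $R_3$ and on the same side of the line through $e_{2k}$ as any point of $R_1$, so its ``opposite-sides-of-both-lines'' partner would have to lie in the empty intersection $R_1\cap R_3$. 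The sequence $v_3,v_4,\dots,v_{2k-1}$ therefore strictly alternates between $R_1$ and $R_3$; since $3$ and $2k-1$ have the same parity, $v_3$ and $v_{2k-1}$ end up in the same region, contradicting the pinning $v_3\in R_3$ and $v_{2k-1}\in R_1$.
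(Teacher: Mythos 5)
Your proof is correct, and it reaches the same parity contradiction as the paper's proof but via a different anchoring. The paper fixes an arbitrary edge $\ell=\overline{p_ip_{i+1}}$ of the cycle and works with the single supporting line of $\ell$: every other edge crosses $\ell$, so the path $p_{i+2},p_{i+3},\dots,p_{i-1}$ alternates sides of that line, and its two endpoints lie on the \emph{same} side because the two edges adjacent to $\ell$ must cross each other; hence that path has even length and the cycle has odd length $|P|+3$. You instead anchor at the leftmost vertex and use \emph{two} supporting lines, which buys you the three-region picture, the exclusion of the middle region $R_2$, and the pinning $v_3\in R_3$, $v_{2k-1}\in R_1$ --- at the cost of needing the extremal choice (to keep every other vertex in the right half-plane, which is exactly what makes $R_1\cap R_3=\emptyset$ usable), a small case analysis, and a separate base case $k=2$. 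Both arguments are complete; the paper's single-line version is somewhat more economical, while yours makes the alternation geometrically concrete as a shuttling between two disjoint angular regions. One implicit point in your write-up deserves a sentence: $v_1,v_2,v_{2k}$ cannot be collinear (otherwise $e_2$ could not cross $e_{2k}$ properly), so the two rays at $v_1$ are genuinely distinct and your regions are well defined; similarly, proper crossings force all the ``opposite side'' statements to be strict, so no vertex ever lands on one of the two lines.
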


A strenghtening to monotone thrackles was proved by Pach and Sterling~\cite{PS11_monotone_thrackles}. 
Since the proof for geometric thrackles is substantially simpler, we include it here to keep this note self-contained.

\begin{proof}[Proof of Lemma~\ref{lem:even_cycles_thrackle}]
	Assume there exists an even cycle $C=p_0,p_1,\ldots,p_n$ for some even $n \ge 4$ 
	with $p_0=p_n$ and $\overline{p_{i-1}p_{i}} \in T$ for $i=1,\ldots,n$.
	We set $p_i=p_j$ for $i=j\ (\text{mod } n)$.
	Consider a line segment $\ell=\overline{p_ip_{i+1}}$ in $C$. 
	The supporting line through $\ell$ divides the plane into two half-planes. 
	Since $T$ is a thrackle, 
	the previous segment $\overline{p_{i-1}p_{i}}$ 
	and the next segment $\overline{p_{i+1}p_{i+2}}$ cross, 
	and thus $p_{i-1}$ and $p_{i+2}$ lie in the same half-plane.
	Moreover, since all segments in the path $P=p_{i+2},p_{i+3},\ldots,p_{i-1}$ cross the segment $\overline{p_ip_{i+1}}$, 
	$P$ is an alternating path with respect to the side of the half-plane, 
	and hence $P$ has even length $|P|$, that is, an even number of edges.
	Since the cycle $C$ has length $|C|=|P|+3$ this is a contradiction, 
	because $C$ was assumed to have even length.
\end{proof}

\begin{proof}[Proof of Theorem~\ref{thm:factor}] 
	Let $S$ be a set of $n$ points in general position in the plane and let $m=\lceil \frac{N}{n} \rceil$. Without loss of generality we may assume that $N=mn$; in case $N<mn$ we may later remove some points from the constructed point set.
	We can also assume that all points of $S$ have distinct $x$- and $y$-coordinates; otherwise we slightly rotate~$S$. 
	Our aim is to construct a set $S'$ of $mn$ points by creating $m$ \emph{copies} of each point from $S$, such that the following two properties hold: 
	\begin{enumerate}[label=(S\arabic*), align=left, leftmargin=*]
		\item\label{item_lem_ce_property1}
		a line segment between two copies of $p\in S$ only intersects
		line segments incident to another copy of $p$,
		\item\label{item_lem_ce_property2}
		all copies of a point are almost on a horizontal line; that is, 
		if $p\in S$ is above (below) $q\in S$ then 
		any line through two different copies of $p$
		is above (below)
		any copy of $q$. 
	\end{enumerate}
	To this end, we place the $m$ copies of a point $p=(x,y)$
	at $p_i = (x+i\varepsilon,y+(i\varepsilon)^2)$ for $i=0,\ldots,m-1$. 
	For sufficiently small $\varepsilon>0$, all points from $S'$ have distinct $x$- and $y$-coordinates and the above conditions are fulfilled.
	
	Let $F'$ be a maximum crossing family in $S'$.
	We will show how to find a crossing family $F$ in $S$ of size at least $|F'|/m$.	
	If $|F'|\le m$, we are clearly done since any segment in $S$ is a crossing family of size~1.
	Hence assume that $F'$ contains more than $m$ segments.
	Then no segment $f' \in F'$ can be incident to two copies of the same point of $S$ due to property~\ref{item_lem_ce_property1}.
	Thus every segment $f' \in F'$ connects (copies of) two distinct points of $S$. 
	
	Let $F$ be the set of segments (without multiplicity) on $S$ induced by $F'$ 
	and by contracting the $m$ copies of $p_i$ to $p$, for each $p \in S$. Formally,
	$F = \{ \overline{pq} \mid \exists i,j\colon \overline{p_iq_j} \in F'\}$;
	Figure~\ref{fig:cross_fam_fmn_lemma_2} gives an illustration.
	Let $G_F$ be the geometric graph induced by $F$,  
	which has the set of end points of $F$ as (drawn) vertices 
	and $F$ as (drawn) edges. More formally, 
	$G_F=(V,E,\phi,\psi)$ with $V=\{p \in S \mid \exists q\colon \overline{pq}\in F \}$, $E=\{ \{p,q\} \mid \overline{pq} \in F\}$, $\phi: V\to \mathbb{R}^2, p \mapsto p$, and $\psi \colon \{p,q\} \mapsto \overline{pq}$ for any $\{p,q\}\in E$.
	By construction $G_F$ is a geometric thrackle,
	and due to Lemma~\ref{lem:even_cycles_thrackle},
	$G_F$ contains no even cycles. 
	Observe that according to property~\ref{item_lem_ce_property2},
	the neighbors of a vertex $p$ in $G_F$ can either be all above $p$ or all below $p$,
	as no segment of $F'$ connected from above to a copy~$p_i$  can cross a segment of $F'$ connected from below to a copy~$p_j$.
	As a consequence, $G_F$ is bipartite and thus contains no odd cycles. Hence, $G_F$ is acyclic; equivalently, a forest.
		
	As long as $G_F$ contains vertices of degree larger than~1, we continue as follows:
	Let $p \in V(G_F)$ be a leaf incident to a vertex $q \in V(G_F)$ of degree larger than~1. 
	We construct $F''$ by removing all segments in $F'$ incident to copies of $q$,
	and by inserting segments connecting all copies of $q$ to copies of $p$ 
	in the way such that all those segments cross;
	Figure~\ref{fig:cross_fam_fmn_lemma_2} gives an illustration of this modification.
	By construction, $|F''| \ge |F'|$ holds and thus $F''$ is another maximal crossing family in $S'$.
	We can replace $F'$ by~$F''$.
	
	\begin{figure}[htb]
		\centering
		\includegraphics{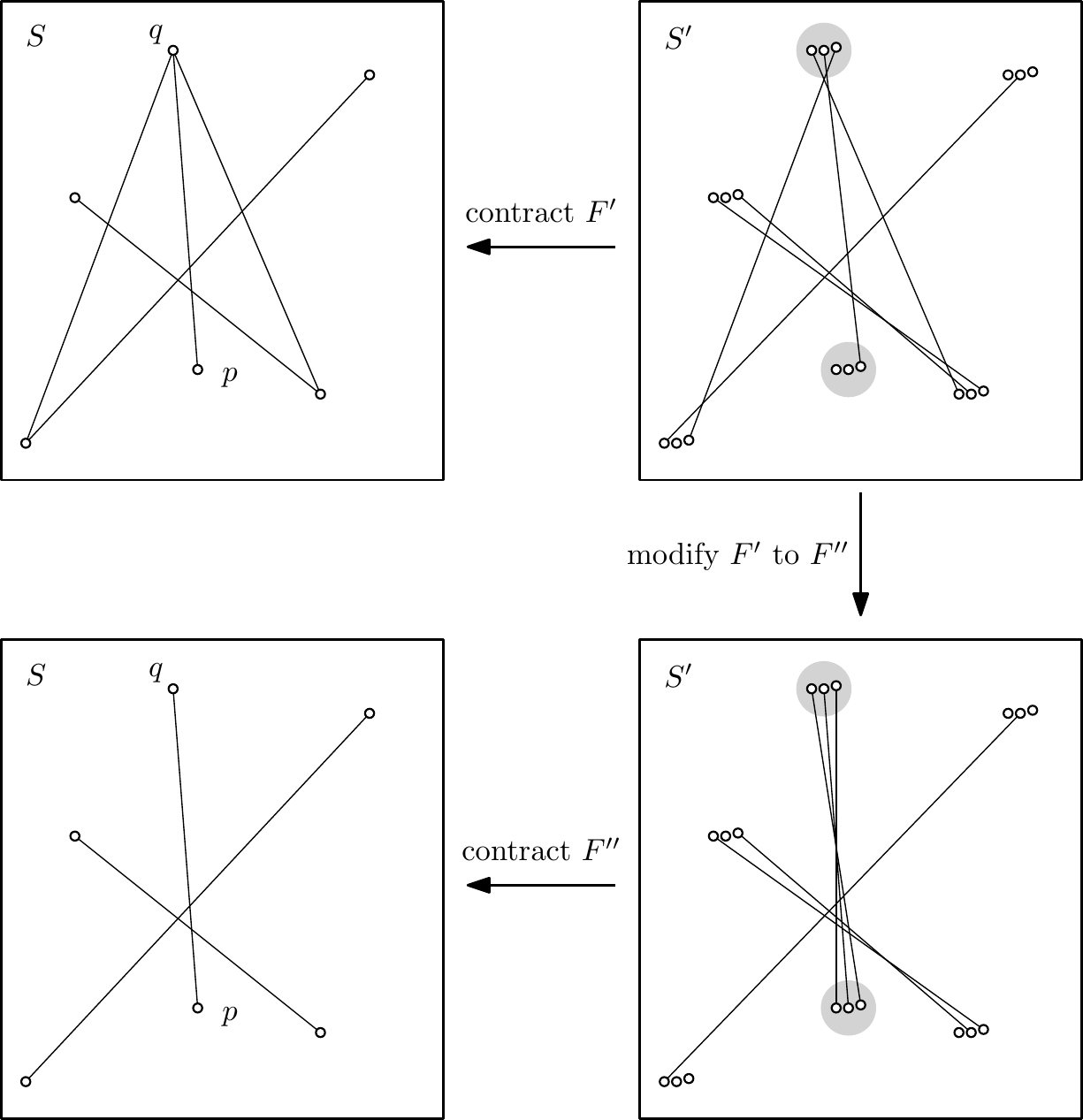}
		\caption{An illustration of the contracting process to attain $F$ from  $F'$, 
			and an illustration of the modification of $F'$ 
			in which all copies of $p$ and $q$ are ``connected'' to each other.}
		\label{fig:cross_fam_fmn_lemma_2}
	\end{figure}
	
	We can iteratively repeat this process, and in every step the number of vertices of degree larger than~1 strictly decreases.
	Therefore we can assume that $G_F$ contains no vertices of degree greater than~1.
	As a consequence, $F$ is a (not necessarily maximal) crossing family in $S$ with $|F| \ge |F'|/m$.
        Therefore, $\cf(S')\le |F'| \le m\cdot |F| \le m\cdot\cf(S) = mk$.
\end{proof}

From the point set depicted in Figure~\ref{fig:n9no3fam}, it follows by Theorem~\ref{thm:factor} that there are sets of $n$ points with no crossing family larger than $2\lceil \frac{n}{9} \rceil$.
This already improves the upper bound $\cf(n)\le\frac{n}{4}$ by Aronov et al.~\cite[Section~6]{aegkkps-94}. Evans and Saeedi~\cite{evans2019problems} constructed a set of 24 points with no crossing family of size 6 or more, which yields the upper bound $\cf(n) \le 5\lceil \frac{n}{24} \rceil$ presented there.

\begin{table}[htb]
\begin{center}
{\small
\begin{tabular}{|c||*{10}{c|}}
\hline
       $k$    & 1 & 2 & 3 & 4 & 5 & 6 & 7 & 8 & 9 & 10 \\  
\hline
\hline
Currently largest sets $S_k$  & 4 & 9 & 14 & 20 & 25 & 29 & 34 & 41 & 45 & 50 \\
\hline
\end{tabular}
}
\end{center}
\caption{The sizes of the largest known point sets $S_k$ with a maximum crossing family of size $k$, that is, with $\cf(S_k)=k$. For $k \le 3$ the sizes are best possible.}
\label{tab:crossfam}
\end{table}

To further improve this bound we searched for sets with small crossing families. 
For $k \ge 5$ we partially extended several smaller sets (for example by doubling the number of points similarly to the process described in the proof of Theorem~\ref{thm:factor}) and used heuristics such as simulated annealing and Brownian motion to optimize them. 
To be more precise,
for our computations we used, among others, the python framework \verb|networkx|  and its \verb|number_of_cliques| implementation to count and minimize the number of $k$-crossing families.

Table~\ref{tab:crossfam} summarizes the sizes of the currently known largest sets with maximum crossing family of size $k\le 10$. 

\begin{figure}
	\centering
	\includegraphics{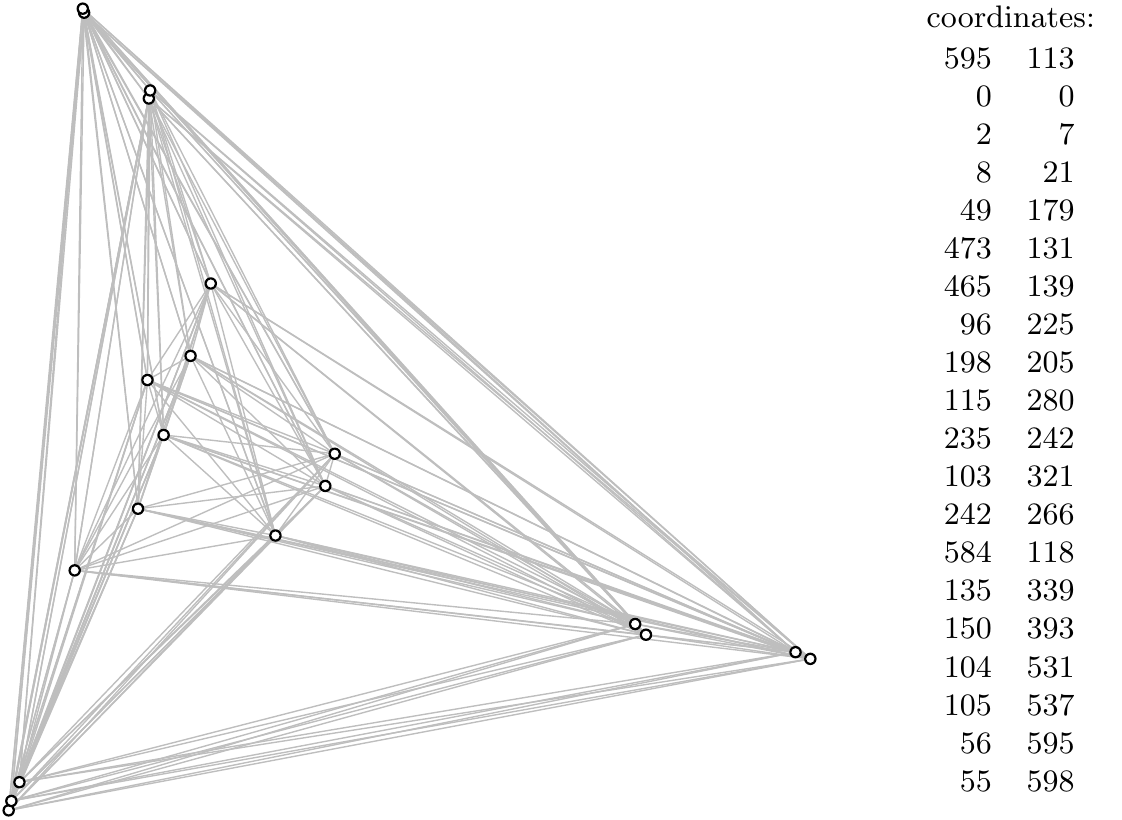}
	\caption{\label{fig:n20no5fam}
		A set $S$ of 20 points with no 5-crossing family.
	}
\end{figure}

\begin{figure}
	\centering
	\includegraphics{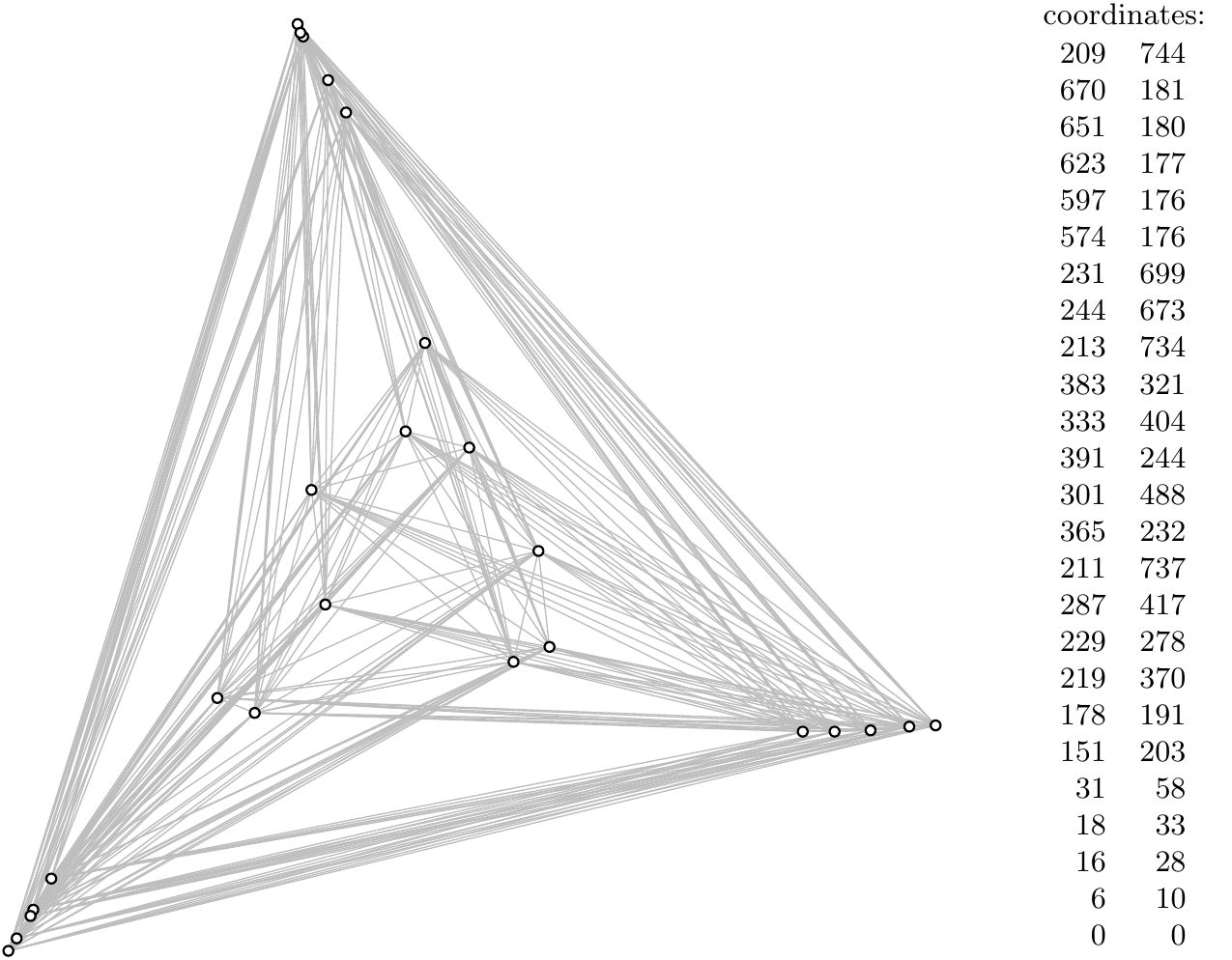}
	\caption{\label{fig:n25no6fam}
		A set $S$ of 25 points with no 6-crossing family.
	}
\end{figure}

Using Theorem~\ref{thm:factor} together with
the sets of 20 points containing no 5-crossing family (see Figure~\ref{fig:n20no5fam}) we get the bound $\cf(n) \le 4\lceil \frac{n}{20} \rceil$.
Also we have a set of 25 points containing no 6-crossing family (see Figure~\ref{fig:n25no6fam}), which implies  $\cf(n) \le 5\lceil \frac{n}{25} \rceil$ and therefore gives a slightly better upper bound for certain values of~$n$.

By doubling the 20-point configuration without 5-crossing families from Figure~\ref{fig:n20no5fam},
we obtained a 40-point configuration without 9-crossing families.
Using heavy computer assistance, we managed to extend this configuration to a 
41-point configuration without 9-crossing families which is shown in Figure~\ref{fig:n41_no9fam}.
By Theorem~\ref{thm:factor} this witnesses $\cf(n) \le 8\lceil \frac{n}{41} \rceil $, where $\frac{8}{41} \approx 0.195$.

\begin{corollary}\label{cor:upperbound}
	It holds that $\cf(n) \le 8\lceil \frac{n}{41} \rceil$.
\end{corollary}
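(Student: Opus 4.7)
The plan is to deduce the corollary as a direct application of Theorem~\ref{thm:factor}. Concretely, I will exhibit a single point set $S\subset\mathbb{R}^2$ in general position with $|S|=41$ and $\cf(S)\le 8$, and then invoke the theorem with $n=41$ and $k=8$ to conclude $\cf(n)\le 8\lceil \tfrac{n}{41}\rceil$ for every $n\ge 41$. For $n<41$ the bound is vacuous or already implied by the previously established small-case values (since $\lceil n/41\rceil=1$ forces the right-hand side to be $8$, which is never violated in that range because $\cf(S)\le \lfloor |S|/2\rfloor$).

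The main work is therefore to produce the 41-point witness. My first step is to start from the 20-point set $S_{20}$ in Figure~\ref{fig:n20no5fam} with $\cf(S_{20})=4$. Applying the doubling construction used in the proof of Theorem~\ref{thm:factor} (replacing each $p=(x,y)$ by $p_0=(x,y)$ and $p_1=(x+\varepsilon,y+\varepsilon^2)$ for a sufficiently small $\varepsilon>0$) yields a 40-point set $S_{40}$. By Theorem~\ref{thm:factor} applied with $m=2$, we have $\cf(S_{40})\le 2\cdot 4=8$, so in particular $S_{40}$ contains no 9-crossing family. This handles everything except gaining one extra point.

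Next I would try to insert a 41st point $q$ into $S_{40}$ while keeping $\cf(S_{40}\cup\{q\})\le 8$. In principle, adding a point can only increase $\cf$, and indeed a generic insertion will create a 9-crossing family. The strategy is to locate a position for $q$ whose insertion does not complete any 8-crossing family into a 9-crossing family. This is a combinatorial feasibility question: for each candidate face of the arrangement determined by the $\binom{40}{2}$ lines through pairs of points of $S_{40}$, the order type of $S_{40}\cup\{q\}$ is fixed, and one can check whether the resulting point set still has $\cf\le 8$. A direct search over all such faces is the heavy-computation step mentioned in the text; I would implement it either by iterating over the cells of the line arrangement (there are $O(|S_{40}|^4)$ of them) and applying a clique-counting routine as in the paper (\texttt{networkx.number\_of\_cliques} on the segment-crossing graph), or by a SAT/optimization formulation in the spirit of Section~\ref{sec:15always4}.

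The main obstacle is precisely that last search: even after fixing a promising doubled configuration, ensuring that \emph{no} choice of nine segments among the $\binom{41}{2}=820$ edges forms a mutually crossing family is a significant combinatorial task, since the crossing graph of $S_{40}\cup\{q\}$ has many large cliques and only one well-placed $q$ may work. Once such a $q$ is found and the resulting configuration is displayed as in Figure~\ref{fig:n41_no9fam}, verifying $\cf(S)\le 8$ is a finite (if large) clique-enumeration check; plugging $n=41$, $k=8$ into Theorem~\ref{thm:factor} then gives $\cf(N)\le 8\lceil N/41\rceil$ for every $N\ge 41$, completing the proof.
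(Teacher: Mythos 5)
Your proposal follows the paper's proof exactly: double the 20-point set without 5-crossing families to get 40 points with $\cf\le 8$, extend by (heavy) computer search to a 41-point configuration with no 9-crossing family as in Figure~\ref{fig:n41_no9fam}, and apply Theorem~\ref{thm:factor} with $n=41$, $k=8$. One small repair: for $18\le n\le 40$ your justification $\cf(S)\le\lfloor |S|/2\rfloor$ does not yield $\cf(n)\le 8$; instead invoke the monotonicity of $\cf(n)$ stated in the introduction, which gives $\cf(n)\le\cf(41)\le 8$ for all $n\le 41$.
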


We remark that
even though finding the 41-point configuration without 9-crossing families took hundreds of CPU days on a cluster,
it can be verified within a few minutes.

\begin{figure}
	\centering
	\includegraphics{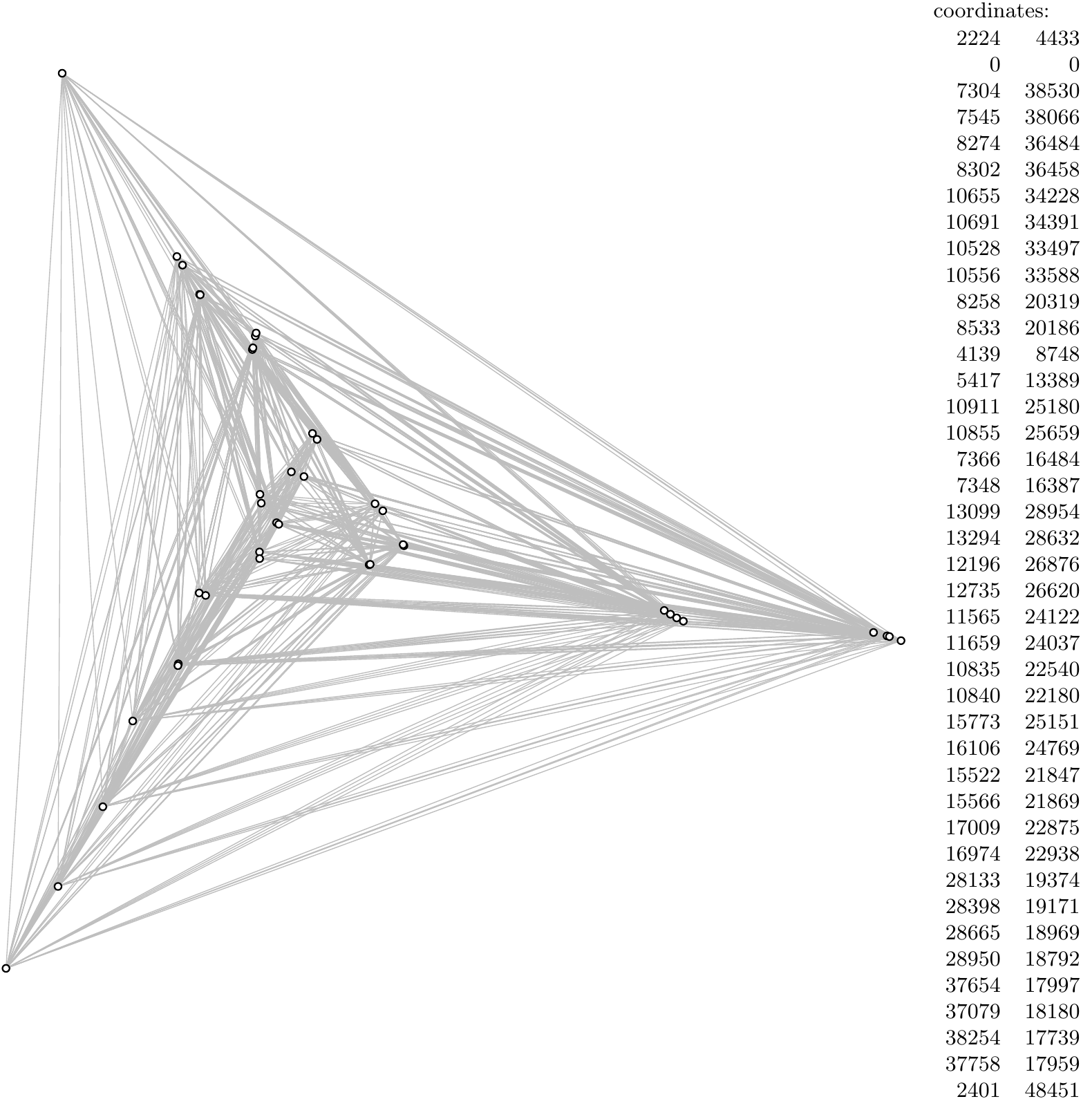}
	\caption{\label{fig:n41_no9fam}
		A set $S$ of 41 points with no 9-crossing family.
	}
\end{figure}


{
	\small
\bibliography{crossingfamily}
\bibliographystyle{alphaabbrv-url}
}

\end{document}